\documentclass[12pt]{article}
\usepackage{hyperref}
\usepackage{amsthm}
\usepackage{latexsym}
\usepackage[latin1]{inputenc}
\usepackage{amsfonts}
\usepackage{hhline}
\usepackage{graphics}
\usepackage{graphicx}
\usepackage{amsmath}

\usepackage{amssymb}
\usepackage{makeidx}
\usepackage{hyperref}

\makeindex
\hypersetup{colorlinks=true, citecolor=black, urlcolor=black, linkcolor=black}

\def\indicator{\mathbf{1}}

\def\E{\operatorname{\bf E}}
\def\P{\operatorname{\bf P}}

\def\Q{\operatorname{\bf Q}}
\def\1{{\bf 1}}
\def\I{{\mathcal I}}
\def\R{\mathbb{R}}
\def\Z{\mathbb Z}
\def\C{\mathbb C}

\def\V{\mathcal V}

\def\beq{\begin{equation}}
\def\eeq{\end{equation}}

\def\process#1{#1=\{#1_t\}_{t\geq 0}}

\newtheorem{lemma}{Lemma}[section]

\makeatletter
\newcommand{\rfill}[1]{\ifmeasuring@#1\else\omit\hfill$\displaystyle#1$\fi\ignorespaces}
\newcommand{\lfill}[1]{\ifmeasuring@#1\else\omit$\displaystyle#1$\hfill\fi\ignorespaces}
\makeatother

\usepackage{natbib}
 \def\cite#1{(\citet{#1})}
\begin{document}
\title{Computing Greeks for L\'evy Models: The Fourier Transform Approach.}
\author{Federico De Olivera\thanks
{Mathematics Center,  School of Sciences, Universidad de la
Rep\'ublica, Montevideo. Uruguay. e-mail: fededeo@gmail.com.}\hspace{0.2cm} Ernesto Mordecki\thanks
{Mathematics Center,  School of Sciences, Universidad de la
Rep\'ublica, Montevideo. Uruguay. e-mail: mordecki@cmat.edu.uy.}}
\date{\today }
\maketitle

\begin{abstract}
The computation of Greeks for exponential L\'evy models are usually approached by Malliavin Calculus and other methods, as the Likelihood Ratio and the finite difference method. 
In this paper we obtain exact formulas for Greeks of European options based on the  Lewis formula for the option value. 
Therefore, it is possible to obtain accurate approximations using Fast Fourier Transform.  We will present an exhaustive development of Greeks for Call options. The error is shown for all Greeks in the Black-Scholes model, where Greeks can be exactly computed. Other models used in the literature are compared, such as the Merton and Variance Gamma models. The presented formulas can reach desired accuracy because  our approach generates error only by approximation of the integral.
\newline {\bf Keywords:} Greeks; exponential L\'evy models.
\end{abstract}

 \newpage
 \section{Introduction}


We consider a L\'evy process $\process X$ defined on a probability
space $(\Omega, {\cal  F}, \Q)$, 
 a finantial market model with two assets, a deterministic savings
account $\process B$, given by \[B_t=B_0e^{rt},\] with $r\ge 0$ and $B_0>0$,  and a stock $\process S$, given by
\begin{equation}\label{eq:stock}
S_t=S_0e^{rt+X_t},
\end{equation}
with $S_0>0$, where $\process X$ is a L\'evy process. 
When the process $X$ has continuous paths,
we obtain the classical Black-Scholes model \cite{Merton73}. 
For general reference on the
subject we refer to \cite{Kyprianou06} or \cite{CT04}.


The aim of this paper is the computation of the price partial derivatives of an European option
with general payoff 
with respect to any parameter of interest. These derivatives
are usually named as ``Greeks'', and consequently we use the term \emph{Greek} to refer to any price parcial derivative 
 of the option (of any order and with respect to any parameter).

Our approach departs from the subtle observation by Cont and Tankov see (\citet{CT04}, p. 365):

\emph{``Contrary to the classical Black-Scholes case, in exponential-L\'evy models
there are no explicit formulae for call option prices, because the probability
density of a L\'evy process is typically not known in closed form. However, the
characteristic function of this density can be expressed in terms of elemen-
elementary functions for the majority of L\'evy processes discussed in the literature.
This has led to the development of Fourier-based option pricing methods for
exponential-L\'evy models. In these methods, one needs to evaluate one Fourier
transform numerically but since they simultaneously give option prices for a
range of strikes and the Fourier transform can be efficiently computed using
the FFT algorithm, the overall complexity of the algorithm per option price
is comparable to that of evaluating the Black-Scholes formula.''}

In other words, in the need of computation of a range of option prices,
from a practical point of view, the Lewis formula works as a closed formula,
as it can be implemented and computed with approximately the same precision and in 
the same time as the Black Scholes formula.

Some papers have addressed this problem. Eberlein, Glau and Papapantoleon \cite{0809.3405v4}  obtained 
a formula similar to the Lewis one, and derived delta ($\Delta$) and gamma ($\Gamma$), the price partial  derivatives with respect to the initial value $S_t$ of first and second order, for an European payoff function. The assumptions are  similar to the ones we require.

Takahashi and Yamazaki \cite{Taka} also obtain these Greeks in the case of  Call options, based on the Carr and Madan approach. The advantage of the Lewis formula is that it gives option prices for general European payoffs,
while Carr-Madam only price European vanilla options.

 Other works deal with the problem of Greeks computation for more general payoff functions, 
 including path dependent options, for example see \cite{Chen2007},\cite{Glasserman2007},\cite{Glasserman08estimatinggreeks}, \cite{kienitz}, \cite{Boya},  \cite{jeannin2010transform}. These works are based on different techniques, such
 as simulation or finite differences introducing a method error, that has to be analyzed whereas our approach does not.

In the present paper we obtain closed formulas for Greeks based on the Lewis formula, that computes efficiently
and with arbitrary precision (as exposed in \cite{CT04}), for arbitrary payoff European options in the L\'evy models with respect to any parameter and arbitrary order. As an example we analyze the case of Call options.

\section{Greeks for General European Options in Exponential L\'evy Models}

In this paper we do not address the interesting problem of the determination of the pricing measure, see for instance \cite{CT04}, assuming then that the given measure $\Q$ is the risk-neutral pricing measure.
In other words, we assume that the martingale condition is satisfied under $\Q$, that in view of \eqref{eq:stock},
stands for the condition 
\begin{equation}\label{eq:martingale}
\E e^{X_1}=1.
\end{equation}

Furthermore, by the L\'evy-Khinchine Theorem, we obtain that $\E e^{izX_t}=e^{t\Psi(z)}$, where the characteristic exponent is
\begin{equation}\label{eq:psi}
\Psi(z)=-iz(1-iz)\frac{\sigma^2}{2}+\int_\R\big(e^{izy}-1-iz(e^y-1)\big) \nu(dy),
\end{equation}
with $\sigma\geq 0$ is the standard deviation of the gaussian part of the L\'evy process, and $\nu$ its jump measure.



Regarding the payoff, following \cite{lewis01}, denote $s=\ln S_T$ and consider a payoff $w(s)$. For instance,
if $K$ is a strike price, 
\begin{equation}\label{eq:cp} 
w(s)=(e^s-K)^+
\end{equation} 
is the call option payoff.
Then, being $\widehat{w}(z)$ the Fourier Transform of $w(s)$, the Lewis formula \cite{lewis01}
for the European options, valued at time $t$, and denoting $\tau=T-t$ the time to maturity, is:
\begin{equation}
 V_t=\frac{e^{-r\tau}}{2\pi}\int_{iv+\R}e^{-iz(\ln(S_t)+r\tau)}e^{\tau \Psi(-z)}\widehat{w}(z)dz, \label{eq:option}
\end{equation}
where $z\in S_V=\{u+iv\colon u\in\R\}$ and $v$ must be chosen depending on the payoff function \cite{lewis01}.
In this context, it is simple to obtain some general formulas for the Greeks.

In order to differentiate under the integral sign, we present the following clasical result.
\begin{lemma} \label{lemma1}
Let $\Theta\subset \R$ an interval and $\I=iv+\R$. Let $h:\I\times\Theta\to \C$ and $g:\I\to \C$ such that 
\begin{itemize}
 \item $h(\cdot, \theta)g(\cdot)$ is integrable for all $\theta \in \Theta$ and $g$ is integrable.
 \item $h(z, \cdot)$ is differentiable in $\Theta$ for all $z\in \I$ and $\frac{\partial h}{\partial \theta}$ is bounded.
 \end{itemize}

Then, $\int_\I h(x,\theta)g(x)dx$ is differentiable  and $$\frac{\partial}{\partial \theta} \int_\I h(x,\theta)g(x)dx=\int_\I \frac{\partial h(x,\theta)}{\partial \theta}g(x)dx\qquad \forall \theta \in \Theta.$$  

\end{lemma}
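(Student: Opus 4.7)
The statement is a standard differentiation-under-the-integral-sign result, and the plan is to reduce it to the dominated convergence theorem via difference quotients.

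First, I would fix $\theta\in\Theta$ and a sequence $h_n\to 0$ (with $h_n\neq 0$ and $\theta+h_n\in\Theta$), and write the difference quotient of $F(\theta):=\int_\I h(x,\theta)g(x)\,dx$ as
\begin{equation*}
\frac{F(\theta+h_n)-F(\theta)}{h_n}=\int_\I \frac{h(x,\theta+h_n)-h(x,\theta)}{h_n}\,g(x)\,dx,
\end{equation*}
which is legitimate because $h(\cdot,\theta+h_n)g(\cdot)$ and $h(\cdot,\theta)g(\cdot)$ are both integrable by hypothesis. The goal is to justify passing the limit $n\to\infty$ under the integral.

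Next, I would bound the integrand uniformly in $n$. Since $h$ is complex valued I cannot invoke the real mean value theorem directly, but I can use the fundamental theorem of calculus applied separately to the real and imaginary parts of $\theta\mapsto h(x,\theta)$: for each $x\in\I$,
\begin{equation*}
h(x,\theta+h_n)-h(x,\theta)=\int_0^{h_n}\frac{\partial h}{\partial \theta}(x,\theta+s)\,ds,
\end{equation*}
so that, letting $M$ denote the uniform bound on $\partial h/\partial\theta$,
\begin{equation*}
\left|\frac{h(x,\theta+h_n)-h(x,\theta)}{h_n}\right|\leq M \qquad \text{for every } x\in\I,\ n\in\N.
\end{equation*}
Multiplying by $|g(x)|$ gives an integrable majorant $M|g(x)|$, since $g$ is integrable on $\I$.

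Finally, the differentiability of $h(z,\cdot)$ at $\theta$ yields the pointwise limit $\frac{h(x,\theta+h_n)-h(x,\theta)}{h_n}\to \frac{\partial h}{\partial \theta}(x,\theta)$ for every $x\in\I$. The dominated convergence theorem then allows the interchange of limit and integral, producing
\begin{equation*}
\lim_{n\to\infty}\frac{F(\theta+h_n)-F(\theta)}{h_n}=\int_\I \frac{\partial h}{\partial \theta}(x,\theta)\,g(x)\,dx,
\end{equation*}
which, since the sequence $h_n\to 0$ was arbitrary, shows $F$ is differentiable at $\theta$ with the stated derivative. No step is really an obstacle; the only subtlety is the replacement of the mean value theorem by the integral representation of the increment to accommodate complex-valued $h$, together with checking that the integrability hypothesis on $g$ (rather than only on $hg$) is exactly what delivers the dominating function needed for DCT.
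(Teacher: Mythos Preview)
Your argument is correct and follows the same route as the paper: bound $\left|\dfrac{\partial}{\partial\theta}\bigl(h(z,\theta)g(z)\bigr)\right|\le C\,|g(z)|$ and then invoke dominated convergence to interchange limit and integral. The only difference is cosmetic: the paper packages this step as a direct citation of Theorem~2.27 in Folland's \emph{Real Analysis}, whereas you unpack that theorem's proof by writing out the difference quotient, using the integral form of the increment to get the uniform bound $M|g|$, and applying DCT explicitly.
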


\begin{proof}
We observe that $\left|\frac{\partial h(z,\theta)g(z)}{\partial \theta}\right|\leq C |g(z)|$ for all $z\in \I$, $\theta \in \Theta$. The result is obtained from Theorem 2.27 in  \cite{folland1999real}.  
\end{proof}
 
In consequence, in what follows, we will always assume that the conditions in Lemma \ref{lemma1} are satisfied for the real part of the integrand because the price imaginary part integrate is zero.

\subsection{First Order Greeks}

We introduce the auxiliary function $$\vartheta(z)=e^{-iz(\ln(S_t)+r\tau)}e^{\tau\Psi(-z)}\widehat{w}(z).$$
Departing from \eqref{eq:option} and \eqref{eq:psi}, by differentiation under the integral sign we obtain  
{\allowdisplaybreaks\begin{align*}
 \Delta_t=&\frac{\partial V_t}{\partial S_t}= -\frac{1}{S_t}\frac{e^{-r\tau}}{2\pi}\int_{iv+\R}iz\vartheta(z)dz,  \\
 \rho_t=&\frac{\partial V_t}{\partial r}= -\tau \frac{e^{-r\tau}}{2\pi}\int_{iv+\R}(1+iz)\vartheta(z)dz,  \\
 \V_t=&\frac{\partial V_t}{\partial \sigma}= \tau\sigma \frac{e^{-r\tau}}{2\pi}\int_{iv+\R}iz(1+iz)\vartheta(z)dz, \\
 \Theta_t=&\frac{\partial V_t}{\partial \tau}= \frac{e^{-r\tau}}{2\pi}\int_{iv+\R}\big[\Psi(-z)-(1+iz)r\big]\vartheta(z)dz. 
\end{align*}}

Usually, the L\'evy models used in the literature depend on a set of parameters, that specify the jump measure. 
Therefore we denote $\nu(dy)=\nu_\theta(dy)$ and $\Psi(z)=\Psi_\theta(z)$, then:

\begin{align*}
\frac{\partial V_t}{\partial \theta}= \tau\frac{e^{-r\tau}}{2\pi}\int_{iv+\R}\frac{\partial \Psi_\theta(-z)}{\partial \theta}\vartheta(z)dz. 
\end{align*}

\subsection{Second Order Greeks}
Similarly, we obtain
{\allowdisplaybreaks\begin{align*}
 \Gamma_t=&\frac{\partial^2 V_t}{\partial S_t^2}= \frac{1}{S_t^2}\frac{e^{-r\tau}}{2\pi}\int_{iv+\R}iz(1+iz)\vartheta(z)dz,  \\
 \text{Vanna}_t=&\frac{\partial^2 V_t}{\partial \sigma\partial S_t}=\tau \sigma \frac{1}{S_t}\frac{e^{-r\tau}}{2\pi}\int_{iv+\R}z^2(1+iz)\vartheta(z)dz, \\
\text{Vomma}_t=&\frac{\partial^2 V_t}{\partial \sigma^2}=\tau \frac{e^{-r\tau}}{2\pi}\int_{iv+\R}\big[1+\tau\sigma^2iz(1+iz)\big]iz(iz+1)\vartheta(z)dz, \\
\text{Charm}_t=&\frac{\partial^2 V_t}{\partial S_t\partial \tau}= -\frac{1}{S_t}\frac{e^{-r\tau}}{2\pi}\int_{iv+\R}iz\big[\Psi(-z)-(1+iz)r\big]\vartheta(z)dz, \\
\text{Veta}_t=&\frac{\partial^2 V_t}{\partial \sigma\partial \tau}= \sigma\frac{e^{-r\tau}}{2\pi}\int_{iv+\R}iz(1+iz)\big[\tau\Psi(-z)-(iz+1)r\tau+1\big]\vartheta(z)dz, \\
\text{Vera}_t=&\frac{\partial^2 V_t}{\partial\sigma\partial r}= -\tau^2\sigma \frac{e^{-r\tau}}{2\pi}\int_{iv+\R}iz(iz+1)^2\vartheta(z)dz. 
\end{align*}}

Other derivatives can be  obtained analogously. 
In next section we will focus in the case of Call options. This allows to obtain  more explicit formulas.

\section{Greeks for Call Options in Exponential L\'evy Models} \label{GCALL}

In order to exploit the particular payoff function, we  exhaustively develop the Greeks for  Call options. The Put option corresponding formulas can be obtained immediately  via Put-Call parity. For other payoff the procedure to obtain the Greeks is analogous. 

When the strike $K$ is fixed, $x=\ln(K/S_t)-r\tau$ is variable in terms of $S_t$, $r$ and $\tau$. Then, we must consider this for the computation of Greeks $\Delta$, $\Gamma$, $\rho$ and others. 


\begin{lemma} \label{lemma3}
 Let $X_\tau$ be a L\'evy process with triplet $(\gamma, \sigma,\nu)$ and charateristic exponent $\Psi(z)$ such that $\Psi(-i)=0$ and $\int_{|y|>1}e^{vy}\nu(dy)<\infty$ with $v\geq 0$. Then, if $z\in iv+\R$
 
  \begin{align*}|\Psi_J(-z)| \leq (|z|^2+|z|)\frac{e^{v}}{2}\int_{|y|\leq 1}  y^2  \nu(dy)+2\int_{|y|>1} (e^{vy}+1) \nu(dy)
       \end{align*}
       
      and 
      
   \begin{align}|\Psi(-z)| \leq (|z|^2+|z|)\Big(\frac{e^{v}}{2}\int_{|y|\leq 1}  y^2  \nu(dy)+\frac{\sigma^2}{2}\Big)+2\int_{|y|>1} (e^{vy}+1) \nu(dy), \label{cotaPsi}
   \end{align}   
 where $\Psi_J(z)=\int_\R \big[e^{izy}-1-iz(e^y-1) \big]\nu(dy)$.  
        
\end{lemma}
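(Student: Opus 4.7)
The plan is to split the integral defining $\Psi_J(-z)$ at $|y|=1$ and estimate each region separately, exploiting the identity $|e^{-izy}|=e^{vy}$ that holds for $z\in iv+\R$.

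For the small-jump region $|y|\le 1$, the key algebraic step is the decomposition
\begin{equation*}
e^{-izy}-1+iz(e^y-1) = \bigl(e^{-izy}-1+izy\bigr)+iz\bigl(e^y-1-y\bigr),
\end{equation*}
which writes the integrand as a sum of two second-order Taylor remainders. Applying the integral remainder identity $e^w-1-w=\int_0^1(1-t)w^2 e^{tw}\,dt$ with $w=-izy$ for the first summand and with $w=y$ for the second, together with the bound $|e^{t(-izy)}|=e^{tvy}\le e^{v}$ for $|y|\le 1$, $v\ge 0$, $t\in[0,1]$, one obtains estimates of order $|z|^2 y^2 e^{v}/2$ and $|z|\,y^2 e^{v}/2$ respectively. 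Summing and integrating against $\nu(dy)$ over $|y|\le 1$ produces the first contribution $(|z|^2+|z|)\frac{e^{v}}{2}\int_{|y|\le 1} y^2\,\nu(dy)$ in the claimed bound.

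For the tail $|y|>1$, I would apply the triangle inequality directly to the integrand, using $|e^{-izy}|=e^{vy}$ and $|e^y-1|\le e^y+1$. The hypothesis $\int_{|y|>1}e^{vy}\,\nu(dy)<\infty$ and the general property $\int_{|y|>1}\nu(dy)<\infty$ of L\'evy measures ensure integrability, and a regrouping of the resulting terms yields the stated tail contribution $2\int_{|y|>1}(e^{vy}+1)\,\nu(dy)$.

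Finally, the bound \eqref{cotaPsi} on the full characteristic exponent $\Psi(-z)$ follows by adding the Gaussian part: since $\Psi(-z)-\Psi_J(-z)=iz(1+iz)\sigma^2/2$, its modulus is at most $|z|(1+|z|)\sigma^2/2\le (|z|^2+|z|)\sigma^2/2$, and combining this with the jump-part estimate delivers the claim. The main delicate point is the small-jump estimate, where the $|z|$-size of each term in isolation must be converted via the integral Taylor remainder into an $O(|z|^2 y^2)+O(|z|\,y^2)$ bound, so that integrability near the origin follows from the L\'evy-measure condition $\int_{|y|\le 1} y^2\,\nu(dy)<\infty$; the tail and Gaussian contributions are then routine bookkeeping.
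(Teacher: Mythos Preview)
Your decomposition on $|y|\le 1$,
\[
e^{-izy}-1+iz(e^y-1)=\bigl(e^{-izy}-1+izy\bigr)+iz\bigl(e^y-1-y\bigr),
\]
is precisely the paper's identity $\Psi_J(-z)=I(-z)+izI(-i)$ (with $I(z)=\int_{\R}[e^{izy}-1-izy\,\indicator_{\{|y|\le 1\}}]\,\nu(dy)$) written at the level of the integrand; the paper uses the Lagrange form of the remainder where you use the integral form, which is cosmetic. So on the small-jump region your argument and the paper's coincide, up to the harmless discrepancy that the second summand produces the constant $e$ rather than $e^v$ when $v<1$.

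The genuine gap is in the tail. Carrying out the triangle inequality on $|y|>1$ as you describe yields
\[
\bigl|e^{-izy}-1+iz(e^y-1)\bigr|\le e^{vy}+1+|z|(e^y+1),
\]
so after integration you are left with a term $|z|\int_{|y|>1}(e^y+1)\,\nu(dy)$ that is linear in $|z|$; no ``regrouping'' can absorb it into the $z$-independent quantity $2\int_{|y|>1}(e^{vy}+1)\,\nu(dy)$. In fact the inequality as stated in the lemma is false: take $\nu=\delta_2$ and $v=0$, so the small-jump integral vanishes and the asserted bound equals the constant $4$, while $|\Psi_J(-u)|=|e^{-2iu}-1+iu(e^2-1)|\sim(e^2-1)|u|\to\infty$ for real $u$. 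The paper's proof makes the same jump, since $|z|\,|I(-i)|$ contributes exactly this linear tail term. What either argument genuinely proves is a bound of the form $|\Psi_J(-z)|\le (|z|^2+|z|)\,C_1+(|z|+1)\,C_2$ with finite constants $C_1,C_2$, and it is this quadratic growth in $|z|$ that is actually used later in the paper; your small-jump estimate is sound, but the claim that the tail collapses to the stated $z$-free constant should be withdrawn.
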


\begin{proof}
 
Let $I(z)=\int_\R \big[e^{izy}-1-izy\indicator_{\{|y|\leq 1\}} \big]\nu(dy)$. 
Applying Taylor's expansion with Lagrange error form at point $y=0$, 
there exists $\theta_y$ with  $|\theta_y|\leq |y|$ such that

\begin{align*}
 e^{izy}-1-izy\indicator_{\{|y|\leq 1\}}= &izy\indicator_{\{|y|> 1\}}- z^2y^2\frac{e^{iz\theta_y}}{2} \\
 =&- z^2y^2\frac{e^{iz\theta_y}}{2} \indicator_{\{|y|\leq 1\}} + \big(izy-z^2y^2\frac{e^{iz\theta_y}}{2}\big)\indicator_{\{|y|> 1\}}\\
 =&-z^2y^2\frac{e^{iz\theta_y}}{2} \indicator_{\{|y|\leq 1\}}+ (e^{izy}-1)\indicator_{\{|y|> 1\}}.
\end{align*}

Then 
\begin{align}
 |I(-z)|\leq&\int_\R \Big| -z^2y^2\frac{e^{-iz\theta_y}}{2} \indicator_{\{|y|\leq 1\}}+ (e^{-izy}-1)\indicator_{\{|y|> 1\}} \Big|\nu(dy)\notag\\
 \leq & |z|^2\int_{|y|\leq 1}  \frac{e^{v\theta_y}}{2}y^2  \nu(dy)+\int_{|y|>1} (e^{vy}+1) \nu(dy)\notag\\
 \leq & |z|^2\frac{e^{v}}{2}\int_{|y|\leq 1}  y^2  \nu(dy)+\int_{|y|>1} (e^{vy}+1) \nu(dy)\label{cotaI}.
 \end{align}

Using \eqref{cotaI} we have  \begin{align*}|\Psi_J(-z)|=&|I(-z)+izI(-i)|\\ \leq &(|z|^2+|z|)\frac{e^{v}}{2}\int_{|y|\leq 1}  y^2  \nu(dy)+2\int_{|y|>1} (e^{vy}+1) \nu(dy).
       \end{align*}

For the continuous part, let  $\Psi_C(-z)=(iz-z^2)\frac{\sigma^2}{2}$, thus
 
 \begin{align*}|\Psi(-z)|\leq &|\Psi_C(-z)|+|\Psi_J(-z)|\\
 \leq& (|z|^2+|z|)\Big(\frac{e^{v}}{2}\int_{|y|\leq 1}  y^2  \nu(dy)+\frac{\sigma^2}{2}\Big)+2\int_{|y|>1} (e^{vy}+1) \nu(dy).
   \end{align*}

\end{proof}

\begin{lemma}\label{derivadasLewis}
 Let $\{X_\tau\}_{\tau\geq 0}$ be a L\'evy process with triplet $(\gamma, \sigma,\nu)$ and charateristic exponent $\Psi(z)$, such that $\Psi(-i)=0$ and $\E[e^{vX_\tau}]<\infty$ with $v> 0$. 
 
 \begin{enumerate}
 \item If $\int_{iv+\R} |z|^{-1}|e^{\tau\Psi(-z)}|dz<\infty$ then 
 \begin{align}
  \P(X_\tau>x)=&-\frac{1}{2\pi}\int_{iv+\R}\frac{e^{izx}}{iz}e^{\tau\Psi(-z)}dz, \label{PrLewis}\\
  \E(e^{X_\tau}\indicator_{\{X_\tau>x\}})=&-\frac{1}{2\pi}\int_{iv+\R}\frac{e^{(1+iz)x}}{1+iz}e^{\tau\Psi(-z)}dz. \label{Pr2Lewis}
 \end{align}
  \item  If $\int_{iv+\R} |z|^n|e^{\tau\Psi(-z)}|dz<\infty$ for some $n\in \Z$, then $X_\tau$ has a density of class $C^n$ and
 \begin{align}
  \frac{\partial ^n f(x)}{\partial x^n}=\frac{1}{2\pi}\int_{iv+\R}(iz)^ne^{izx}e^{\tau\Psi(-z)}dz. \label{ddens}
 \end{align}
\end{enumerate}

\begin{proof} For a Call option the Fourier Transform of the payoff function is $\widehat{w}(z)=\frac{e^{iz(\ln(S_t)+r\tau+x)}}{iz(1+iz)}$. Then from the option value \eqref{eq:option} we have, with  $x=\log(K/S_t)-r\tau$,  
\begin{equation}\label{eq:CallLewis}
C_t(x)=S_t\frac{e^{x}}{2\pi}\int_{iv+\R}\frac{e^{izx}e^{\tau \Psi(-z)}}{iz(iz+1)}dz.
\end{equation}
Then, being $x\in [\alpha, \beta]$ and $C_1=\max_{\alpha\leq x\leq \beta} e^{(1-v)x}$
\begin{equation*}\left|\frac{\partial e^{(1+iz)x}e^{\tau \Psi(-z)}[iz(iz+1)]^{-1} }{\partial x}\right|\leq C_1|z^{-1}||e^{\tau\Psi(-z)}|\in L^1(iv+\R), 
\end{equation*}
and by Theorem 2.27 in \cite{folland1999real} we can differentiate under the integral sign. Therefore, with $S_t=1$ 

\begin{align*}\P(X_\tau>x)=& -e^{-x}\frac{\partial }{\partial x}\int_{x}^{\infty}(e^s-e^x)F(ds)\\
=&-e^{-x}\frac{\partial C_t(x)}{\partial x}=-\frac{1}{2\pi}\int_{iv+\R}\frac{e^{izx}}{iz}e^{\tau\Psi(-z)}dz.  
\end{align*}
On the other hand, with $S_t=1$ 

\begin{align*}
 \E(e^{X_\tau}\indicator_{\{X_\tau>x\}})=&\int_{x}^{\infty}e^s F(ds)=-e^{x}\frac{\partial }{\partial x}\int_x^{\infty}(e^{s-x}-1)F(ds)\\
 =&-e^x \frac{\partial e^{-x}C_t(x)}{\partial x}=C_t(x)-\frac{\partial C_t(x)}{\partial x}\\
 =&\frac{1}{2\pi}\int_{iv+\R}\frac{e^{(1+iz)x}}{iz(1+iz)}e^{\tau \Psi(-z)}dz -\frac{1}{2\pi}\int_{iv+\R}\frac{e^{(1+iz)x}}{iz}e^{\tau \Psi(-z)}dz\\
 =&-\frac{1}{2\pi}\int_{iv+\R}\frac{e^{(1+iz)x}}{1+iz}e^{\tau \Psi(-z)}dz.
\end{align*}
For the second part, observe in \eqref{PrLewis} that  if  $x\in [\alpha, \beta]$ and $C_2=\max_{\alpha\leq x\leq \beta} e^{-vx}$ \begin{align*}\left|\frac{\partial^{n+1} \frac{e^{izx}}{iz}e^{\tau\Psi(-z)} }{\partial x^{n+1}}\right|\leq C_2|z|^n|e^{\tau\Psi(-z)} |\in L^1(iv+\R).
\end{align*}
The result is obtained from Theorem 2.27 in \cite{folland1999real}.
\end{proof}

\end{lemma}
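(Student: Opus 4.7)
The strategy is to bootstrap from the Lewis call formula \eqref{eq:CallLewis}: for $S_t=1$ the call price equals $C_t(x)=\int_x^\infty(e^s-e^x)F(ds)$, so that careful differentiation in $x$ peels off the two ingredients $\P(X_\tau>x)$ and $\E[e^{X_\tau}\indicator_{\{X_\tau>x\}}]$ that form the natural decomposition of the call payoff. Once part~1 is in hand, part~2 will follow by iterated differentiation of \eqref{PrLewis}.

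\emph{Part 1.} First I apply Leibniz's rule to $C_t(x)=\int_x^\infty(e^s-e^x)F(ds)$; the boundary contribution vanishes because $(e^x-e^x)f(x)=0$, leaving $\partial_xC_t(x)=-e^x\P(X_\tau>x)$. Equivalently, $\E[e^{X_\tau}\indicator_{\{X_\tau>x\}}]=\int_x^\infty e^sF(ds)=C_t(x)+e^x\P(X_\tau>x)=C_t(x)-\partial_xC_t(x)$. Next I differentiate \eqref{eq:CallLewis} inside the integral sign; the algebraic identity $\tfrac{1+iz}{iz(iz+1)}=\tfrac{1}{iz}$ collapses the derivative to $\partial_xC_t(x)=\tfrac{e^x}{2\pi}\int_{iv+\R}\tfrac{e^{izx}e^{\tau\Psi(-z)}}{iz}dz$, and multiplying by $-e^{-x}$ gives \eqref{PrLewis}. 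For \eqref{Pr2Lewis}, the identity $\tfrac{1}{iz(iz+1)}-\tfrac{1}{iz}=-\tfrac{1}{iz+1}$ converts $C_t(x)-\partial_xC_t(x)$ into the stated integral.

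\emph{Part 2.} Starting from \eqref{PrLewis} and using $-\partial_x\P(X_\tau>x)=f(x)$, one differentiation under the integral sign kills the denominator $iz$ and yields $f(x)=\tfrac{1}{2\pi}\int_{iv+\R}e^{izx}e^{\tau\Psi(-z)}dz$. Differentiating $n$ further times brings down a factor $iz$ at each step and produces \eqref{ddens}.

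\emph{Main obstacle.} The nontrivial point throughout is the justification of differentiation under the integral sign on the shifted contour $iv+\R$. On this line $|e^{izx}|=e^{-vx}$ is bounded uniformly in $x$ over any compact interval $[\alpha,\beta]$, so the $x$-derivatives of the integrand are dominated (up to a constant depending on $[\alpha,\beta]$) by $|z|^{-1}|e^{\tau\Psi(-z)}|$ in part~1, and by $|z|^n|e^{\tau\Psi(-z)}|$ at the $(n{+}1)$-th differentiation step of part~2. These are exactly the integrability hypotheses assumed, so Lemma~\ref{lemma1} (equivalently Theorem~2.27 of \cite{folland1999real}) applies and closes the argument.
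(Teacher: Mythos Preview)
Your proposal is correct and follows essentially the same route as the paper's proof: both start from the Lewis call formula \eqref{eq:CallLewis} with $S_t=1$, differentiate in $x$ to extract $\P(X_\tau>x)=-e^{-x}\partial_xC_t(x)$ and $\E[e^{X_\tau}\indicator_{\{X_\tau>x\}}]=C_t(x)-\partial_xC_t(x)$, use the identical partial-fraction identities, and justify differentiation under the integral via the same domination by $|z|^{-1}|e^{\tau\Psi(-z)}|$ (resp.\ $|z|^n|e^{\tau\Psi(-z)}|$) on compact $x$-intervals.
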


%


\subsection{Generalized Black-Scholes Formula for L\'evy Processes.}

We consider by now that $\process X$ in \eqref{eq:stock} is an arbitrary stochastic process satisfying the martingale condition \eqref{eq:martingale}, 
and we introduce the measure $\tilde{\Q}$ by the equation
\begin{equation}\label{eq:esscher}
\frac{d\tilde\Q}{d\Q}=e^{X_T}.
\end{equation}
This new measure $\tilde\Q$ is the Esscher Transform of $\Q$ with parameter $\theta=1$, 
and it was baptized by Shiryaev et al. \cite{ShiryaevKabanovKramkov94} as the \emph{dual martingale measure}.

We consider a call option with payoff \eqref{eq:cp}, and denote the log forward moneyness\footnote{This seems to be the standard definition, although in \cite{CT04} is defined as the opposite quantity.} by $x=\ln(K/S_0)-r\tau$. Its price in the model we consider can be transformed as 
\begin{align*}
C_t(x)	&=e^{-r\tau}\E(S_te^{r\tau+X_\tau}-S_te^{r\tau+x})^+
					=S_t\E(e^{X_\tau}-e^x)^+ \notag \\
			&=S_t\E(e^{X_\tau}-e^x)\indicator_{\{X_\tau>x\}}
					=S_t\left(\E e^{X_\tau}\indicator_{\{X_\tau>x\}}-e^x\E \indicator_{\{X_\tau>x\}}\right) \notag\\
			&=S_t\left(\tilde\Q(X_\tau>x)-e^x\Q(X_\tau>x)\right).
\end{align*}

Then, we have a closed formula in terms of the probability $\Q$ and $\tilde{\Q}$. This formula is obtained in (\citet{tankov2010financial}, p. 68) and is a generalization of the Black-Scholes formula when the underlying asset $X_T$ is a normal random variable.
Furthermore we observe  that the first term to be computed
$$
\E e^{X_\tau}\indicator_{\{X_\tau>x\}}=\tilde\Q(X_\tau>x)
$$
is the price of an \emph{asset or nothing} option, while the second term
$$
\E \indicator_{\{X_\tau>x\}}=\Q(X_\tau>x)
$$
is the price of a \emph{digital} option.


In the  case that   $\process X$ is a L\'evy process under $\Q$, we obtain the characteristic triplet $(\tilde{\sigma}^2, \tilde\nu, \tilde \gamma)$ under $\tilde\Q$ by the formulas
\begin{align*}
\tilde{\sigma}&=\sigma, \\
\tilde{\nu}(dx)&=e^x \nu(dx), \\
\tilde{\gamma}&=\frac{\sigma^2}{2}+\int_\R(e^{-y}-1+h(y))\tilde{\nu}(dy).
 \end{align*}
Furthermore, if $X_t$  has a density $f_t(x)$, 
by \eqref{eq:esscher}, we obtain the density $\tilde{f}_t$ of $X_t$ under $\tilde{\Q}$, given by
\begin{align*} 
 \tilde{f}_t(s)=e^s f_t(s).
\end{align*}
%
%
%
%




In order to obtain Greeks in terms of the risk neutral measure, we replace $\P$ by $\Q$ in \eqref{PrLewis} and consequently \eqref{Pr2Lewis}, \eqref{ddens} and \eqref{eq:CallLewis} are related to the probability measure $\Q$.

%
%
%
%
%
%
%
%

\subsection{First Order Greeks for Call Options}

In this section we do not assume general requirements. 
We specify the requirements in each case. 


\subsubsection*{Delta}

Assume that $\int_{iv+\R}|z|^{-1}|e^{\tau \Psi(-z)}|dz<\infty$ and  $S_t\in [A,B]$.
From \eqref{eq:CallLewis} we obtain

\begin{align*}
 \Delta_{t}^{L}=&\frac{\partial C_{t}(x(S_t))}{\partial S_t}\notag\\
 =& \frac{\partial }{\partial S_t}S_t\frac{1}{2\pi}\int_{iv+\R} \frac{e^{(1+iz)x(S_t)}}{iz(1+iz)}e^{\tau\Psi(-z)}dz \notag\\
 =& \frac{1}{2\pi}\int_{iv+\R} \frac{e^{(1+iz)x}}{iz(1+iz)}e^{\tau\Psi(-z)}dz-\frac{1}{2\pi}\int_{iv+\R} \frac{e^{(1+iz)x}}{iz}e^{\tau\Psi(-z)}dz \notag\\
 =& -\frac{1}{2\pi}\int_{iv+\R} \frac{e^{(1+iz)x}}{1+iz}e^{\tau\Psi(-z)}dz=\tilde{\Q}(X_\tau>x). 
 \end{align*}

%
%

\subsubsection*{Rho}

Denote now $x=\ln(K/S_t)-r\tau$, that depends on the interest rate $r$.  
Assume that $\int_{iv+\R}|z|^{-1}|e^{\tau \Psi(-z)}|dz<\infty$ and  $r\in [R_1,R_2]$.
Then


\begin{align*}
 \rho_{t}^L=&\frac{\partial C_{t}(x(r))}{\partial r}= S_t\frac{1}{2\pi}\int_{iv+\R}-\tau\frac{e^{(1+iz)x}}{iz}e^{\tau\Psi(-z)}dz =\tau S_te^x\Q(X_\tau>x).
  \end{align*}


\subsubsection*{Vega} In Black-Scholes, Vega shows the change in variance of the log-price. 
In exp-L\'evy models, the derivative of $C_t(x)$ w.r.t. $\sigma$ does not give exactly the same information.
 We assume that $X_\tau$ has density $f$, $\sigma\in [\Sigma_1,\Sigma_2]$ with $\Sigma_1>0$ and $z\in iv+\R$. Let 
 \begin{align*}h(z,\sigma)=&{e^{\tau iz(1+iz)\frac{\sigma^2}{2}}}, &
 g(z)=&\frac{e^{izx+\tau \int_\R(e^{izy}-1-iz(e^y-1))\nu(dy)}}{iz(1+iz)}.                                                                                                                                                      \end{align*}

Thus $\frac{\partial h(z,\sigma)}{\partial \sigma}$ is bounded. On the other hand $\int_{iv+\R} |g(z)| dz<\infty$ because $|\E(e^{-i(iv+s)J_\tau})|\leq \E(e^{vJ_\tau})<\infty$, where $J_\tau$ is the jump part of $X_\tau$. By Lemma \ref{lemma1} we can differentiate under the integral sign. 

Then, 
\begin{align*}
 \V^L_t=&\frac{\partial C_t(x)}{\partial \sigma}= S_t\frac{e^x}{2\pi}\int_{iv+\R} \frac{e^{izx}e^{\tau \Psi_\sigma(-z)}}{iz(iz+1)}\tau  \sigma  iz(1+iz)dz \notag\\
 =& S_t\tau\sigma e^xf_\tau(x) .
\end{align*}


In order to complete the information provided by vega we can calculate the derivative with respect to jumps intensity.

We assume that  $\int_{iv+\R}|e^{\tau \Psi(-z)}|dz<\infty$. Let $\nu(dy)=\lambda \bar\nu(dy)$ with $\lambda \in [\lambda_1,\lambda_2]$, then  let
\begin{align*}
 h(z,\lambda)=&\frac{e^{\tau \lambda \int_{\R}\big[e^{-izy}-1+iz(e^y-1)\big]\bar\nu(dy)}}{\tau \int_{\R}\big[e^{-izy}-1+iz(e^y-1)\big]\bar\nu(dy)},\\
 g(z)=&\frac{e^{(iz+1)x}e^{\tau iz(1+iz)\frac{\sigma^2}{2}}}{iz(iz+1)}\tau\int_{\R}\big[e^{-izy}-1+iz(e^y-1)\big]\bar\nu(dy),
\end{align*}
where $\frac{\partial h(z,\lambda)}{\partial \lambda}$ is bounded and from Lemma \ref{lemma3} $\int_{iv+\R}|g(z)|dz<\infty$, then  by Lemma \ref{lemma1}

\begin{align*}
 \frac{\partial_\tau C_t(x)}{\partial \lambda}=& \frac{\partial}{\partial \lambda} S_t\frac{1}{2\pi}\int_{iv+\R}\frac{e^{(iz+1)x}e^{\tau\Psi(-z)}}{iz(iz+1)}dz\notag\\=&\tau S_t\frac{1}{2\pi}\int_{iv+\R}\frac{e^{(iz+1)x}e^{\tau\Psi(-z)}}{iz(iz+1)}\overline{\Psi}_J(-z)dz,\notag 
\end{align*}
with $\overline{\Psi}_J(-z)=\int_{\R}\big[e^{-izy}-1+iz(e^y-1)\big]\bar\nu(dy)$. 
Using Fubini's Theorem we obtain

{\allowdisplaybreaks\begin{align*}
 \frac{\partial_\tau C_t(x)}{\partial \lambda}=& \frac{\partial}{\partial \lambda} S_t\frac{1}{2\pi}\int_{iv+\R}\frac{e^{(iz+1)x}e^{\tau\Psi(-z)}}{iz(iz+1)}dz\notag \\
 =&\tau S_t\frac{1}{2\pi}\int_{iv+\R}\frac{e^{(iz+1)x}e^{\tau\Psi(-z)}}{iz(iz+1)}\int_{\R}\big[e^{-izy}-1+iz(e^y-1)\big]\bar\nu(dy)dz\notag \\
 =&\tau S_t\Big[\int_\R \Big(e^{y} \frac{e^{x-y}}{2\pi}\int_{iv+\R}e^{iz(x-y)}\frac{e^{\tau\Psi(-z)}}{iz(1+iz)}dz \notag \\
 &\rfill{-\frac{e^{x}}{2\pi}\int_{iv+\R}e^{izx}\frac{e^{\tau\Psi(-z)}}{iz(1+iz)}dz }    \notag \\
&\rfill{+ (e^y-1)\frac{e^{x}}{2\pi}\int_{iv+\R}e^{izx}\frac{e^{\tau\Psi(-z)}}{1+iz}dz \Big)\bar\nu(dy)\Big]}\notag \\
 =&\tau \Big[\int_\R \Big(e^{y}C_t(x-y)-C_t(x)-S_t(e^{y}-1)\tilde{\Q}(X_\tau>x)\Big)\bar\nu(dy)\Big]. 
\end{align*}}
The use of Fubini's Theorem is justified by \eqref{cotaI} and the additional hypothesis $\int_{iv+\R}|e^{\tau\Psi(-z)}|dz<\infty$.

\subsubsection*{Theta}

We assume that $\int_{iv+\R}|z^2e^{\tau\Psi(-z)}|dz<\infty$, $\tau\in [\mathcal{T}_1,\mathcal{T}_2]$ and $z\in iv+\R$, let

\begin{align*}
 h(z,\tau)={e^{(iz+1)x_\tau}e^{\tau \Psi(-z)}},&& g(z)=\frac{1}{iz(1+iz)}.
\end{align*}
Then, $\int_{iv+\R}|g(z)|dz<\infty$, moreover, from \eqref{cotaPsi} and   $\int_{iv+\R}|z^2e^{\tau\Psi(-z)}|dz<\infty$
\begin{align*}\frac{\partial h(z, \tau) }{\partial \tau}=& {e^{(iz+1)x_\tau}e^{\tau \Psi(-z)}}\Big(-r(1+iz)+\Psi(-z)\Big)
\end{align*}
is bounded and by Lemma \ref{lemma1}, 

\begin{align*}
  \Theta_t^L=&\frac{\partial_\tau C_t(x_\tau)}{\partial \tau}= \frac{\partial}{\partial \tau} S_t\frac{1}{2\pi}\int_{iv+\R}\frac{e^{(iz+1)x_\tau}e^{\tau\Psi(-z)}}{iz(iz+1)}dz\notag \\
 =& S_t\frac{1}{2\pi}\int_{iv+\R}\frac{e^{(iz+1)x_\tau}e^{\tau\Psi(-z)}}{iz(1+iz)}\Big(\Psi(-z)-r(1+iz)\Big)dz.\notag \\
\end{align*}

Using Fubini's Theorem we obtain
{\allowdisplaybreaks\begin{align*}
 \Theta_t^L=&S_t\frac{1}{2\pi}\int_{iv+\R}\frac{e^{(iz+1)x_\tau}e^{\tau\Psi(-z)}}{iz(1+iz)}\Big(\Psi(-z)-r(1+iz)\Big)dz\notag \\
 =& S_t\Big[-\frac{r}{2\pi}\int_{iv+\R}\frac{e^{(iz+1)x_\tau}e^{\tau\Psi(-z)}}{iz}dz \notag \\
 &+  \frac{1}{2\pi}\int_{iv+\R}\frac{e^{(iz+1)x_\tau}e^{\tau\Psi(-z)}}{iz(iz+1)}\Big(iz(1+iz)\frac{\sigma^2}{2}\notag \\
 &\rfill{+\int_{\R}\big[e^{-izy}-1+iz(e^y-1)\big]\nu(dy)\Big)dz\Big]}\notag \\
 =&S_t\Big[re^{x_\tau}\Q(X_\tau>x_\tau)+\frac{\sigma^2}{2}e^{x_\tau}{f}_\tau(x_\tau)\notag \\
& \rfill{ +\int_\R \Big(e^{y} \frac{e^{x-y}}{2\pi}\int_{iv+\R}e^{iz(x-y)}\frac{e^{\tau\Psi(-z)}}{iz(1+iz)}dz -\frac{e^{x}}{2\pi}\int_{iv+\R}e^{izx}\frac{e^{\tau\Psi(-z)}}{iz(1+iz)}dz     }\notag \\
&\rfill{+ (e^y-1)\frac{e^{x}}{2\pi}\int_{iv+\R}e^{izx}\frac{e^{\tau\Psi(-z)}}{1+iz}dz \Big)\nu(dy)\Big]}\notag \\
 =&S_t\Big[re^{x_\tau}\Q(X_\tau>x_\tau)+\frac{\sigma^2}{2}e^{x_\tau}{f}_\tau(x_\tau)\Big]\notag \\
&\rfill{ +\int_\R \Big(e^{y}C_t(x_\tau-y)-C_t(x_\tau)-S_t(e^{y}-1)\tilde{\Q}(X_\tau>x_\tau)\Big)\nu(dy).} 
\end{align*}}
The use of Fubini's Theorem is justified by \eqref{cotaI} and the additional hypothesis $\int_{iv+\R}|z^2e^{\tau\Psi(-z)}|dz<\infty$.

\subsection{Second Order Greeks for Call Options}


\subsubsection*{Gamma}
Once Delta is obtained, we must only diferentiate again with respect to $S_t$, to obtain Gamma. We assume that $X_\tau$ has density $f$ and $S_t\in [A,B]$, then

\begin{align*}
 \Gamma_{t}^L=&\frac{\partial^2 C_{t}(x(S_t))}{\partial S_t^2}= \frac{\partial \tilde{\Q}(X_\tau>x(S_t))}{\partial S_t}\notag\\
=& \frac{1}{S_t}\tilde{f}(x)= \frac{e^{x}}{S_t}{f}(x). 
  \end{align*} 

\subsubsection*{Vanna}
We assume that $\int_{iv+\R} |ze^{\tau\Psi(-z)}|dz<\infty$ and $0<\Sigma_1\leq \sigma\leq \Sigma_2$, then

\begin{align*}
 \frac{\partial^2 C_t(x)}{\partial \sigma \partial S_t}=& \frac{\partial \V_t^L}{\partial S_t}=\tau \sigma e^{x(S_t)}{f}_t(x(S_t))-\tau \sigma e^{x(S_t)}\Big({f}_t(x(S_t))+{f}_t'(x(S_t))\Big)\notag\\
 =&-\tau \sigma e^{x} {f}_\tau'(x).
\end{align*}

\subsubsection*{Vomma}
We assume that $\int_{iv+\R} |z^2e^{\tau\Psi(-z)}|dz<\infty$ and $0<\Sigma_1\leq \sigma\leq \Sigma_2$, let $z\in iv+\R$ and  denote

\begin{equation*}
 h(z,\sigma)=z^2e^{\tau iz(1+iz)\frac{\sigma^2}{2}},\quad
 g(z)= \frac{e^{izx+\tau \int_\R(e^{izy}-1-iz(e^y-1))\nu(dy)}}{z^2}.
\end{equation*}

Thus $\frac{\partial h(z,\sigma)}{\partial \sigma}$ is bounded and $\int_{iv+\R} |g(z)| dz<\infty,$ 
because $|\E(e^{-izJ_\tau})|\leq \E(e^{vJ_\tau})<\infty$, where $J_\tau$ is the jump part of $X_\tau$. By Lemma \ref{lemma1} we can differentiate under the integral sign.

\begin{align*}
 \frac{\partial^2 C_t(x)}{\partial \sigma^2}=& \frac{\partial \V_t^L}{\partial \sigma}=S_t\tau e^{x}{f}_\tau(x) + S_t\tau \sigma \frac{e^x}{2\pi}\int_{iv+\R}e^{izx}e^{\tau \Psi(-z)} \tau\sigma (iz-z^2)dz \notag\\
 =& S_t\tau e^{x}\Big({f}_\tau(x) +\tau \sigma^2 \big[f_\tau'(x)+f_\tau''(x)\big]\Big).
\end{align*}

\subsubsection*{Charm}
We assume that  $\int_{iv+\R} |z^3e^{\tau\Psi(-z)}|dz<\infty$, $\tau\in [\mathcal{T}_1,\mathcal{T}_2]$ and $z\in iv+\R$, let  

\begin{align*}
 h(z,\tau)=z{e^{(iz+1)x_\tau}e^{\tau \Psi(-z)}},\qquad g(z)=\frac{1}{z(1+iz)}.
\end{align*}
Then, $\int_{iv+\R}|g(x)|dx<\infty$  and by Lemma \ref{lemma3}

\begin{align*}\frac{\partial h(z, \tau) }{\partial \tau}=& z{e^{(iz+1)x_\tau}e^{\tau \Psi(-z)}}\Big(-r(1+iz)+\Psi(-z)\Big)
\end{align*}
is bounded. By Lemma \ref{lemma1}, 

\begin{align}
 \frac{\partial^2 C_t(x)}{\partial \tau \partial S_t}=& \frac{\partial \tilde\Q(X_\tau>x_\tau)}{\partial \tau}= \frac{\partial }{\partial \tau} \frac{-1}{2\pi}\int_{iv+\R} e^{(iz+1)x_\tau}\frac{e^{\tau \Psi(-z)}}{1+iz}dz \notag \\
=&\frac{1}{2\pi}\int_{iv+\R} e^{(iz+1)x_\tau}\frac{e^{\tau \Psi(-z)}}{1+iz}\Big(r(1+iz)-\Psi(-z) \Big)dz. \label{charmL2}
\end{align}
Using Fubini's Theorem  we obtain
{\allowdisplaybreaks\begin{align}
 \frac{\partial^2 C_t(x)}{\partial \tau \partial S_t}=&\frac{1}{2\pi}\int_{iv+\R} e^{(iz+1)x_\tau}\frac{e^{\tau \Psi(-z)}}{1+iz}\Big(r(1+iz)-\Psi(-z) \Big)dz \notag\\
=& re^{x_\tau} f_\tau(x_\tau) -\frac{\sigma^2}{2}e^{x_\tau} f_\tau'(x_\tau)\notag\\
&-\int_\R\Big[e^y\frac{e^{x_\tau-y}}{2\pi}\int_{iv+\R} e^{iz(x_\tau-y)}\frac{e^{\tau \Psi(-z)}}{1+iz}dz -\frac{e^{x_\tau}}{2\pi}\int_{iv+\R} e^{izx_\tau}\frac{e^{\tau \Psi(-z)}}{1+iz}dz\notag \\
 & +(e^y-1)\Big\{\frac{e^{x_\tau}}{2\pi}\int_{iv+\R} e^{izx_\tau}{e^{\tau \Psi(-z)}}dz\notag\\
 &\rfill{-\frac{e^{x_\tau}}{2\pi}\int_{iv+\R} e^{izx_\tau}\frac{e^{\tau \Psi(-z)}}{1+iz}dz \Big\}\Big]\nu(dy)} \notag \\ 
 =& -re^{x_\tau} f_\tau(x_\tau) +\frac{\sigma^2}{2}e^{x_\tau} f_\tau'(x_\tau) \notag \\
 &+\int_\R\Big[-e^y\tilde{\Q}(X_\tau>x_\tau-y)+\tilde\Q(X_\tau>x_\tau)\notag \\
 &\rfill{+(e^y-1)\big\{e^xf_\tau(x)+\tilde\Q(X_\tau>x_\tau)\big\}\Big]\nu(dy)} \notag \\
 =& re^{x_\tau} f_\tau(x_\tau) -\frac{\sigma^2}{2}e^{x_\tau} f_\tau'(x_\tau) \notag \\
 & -\int_\R\Big[e^y\Big(\tilde\Q(X_\tau>x_\tau)-\tilde\Q(X_\tau>x_\tau-y)\Big)\notag\\
 &\rfill{+(e^y-1)e^{x_\tau} f_\tau(x_\tau)\Big]\nu(dy).}\notag \\ \label{charmL}
\end{align}}
The use of Fubini's Theorem is justified by \eqref{cotaI} and the additional hypothesis $\int_{iv+\R}|z^3e^{\tau\Psi(-z)}|dz<\infty$.

\subsubsection*{Veta}

We assume that $\int_{iv+\R} |z^4e^{\tau\Psi(-z)}|dz<\infty$. Similar to $\text{Charm}_t$, we assume that $\tau\in [\mathcal{T}_1,\mathcal{T}_2]$ and $z\in iv+\R$, and denote  
\begin{align*}
 h(z,\tau)=z^2{e^{(iz+1)x_\tau}e^{\tau \Psi(-z)}},\quad g(z)=\frac{1}{z^2}.
\end{align*}
Then, $\int_{iv+\R}|g(z)|dz<\infty$ and by Lemma \ref{lemma3}

\begin{align*}\frac{\partial h(z, \tau) }{\partial \tau}=& z^2{e^{(iz+1)x_\tau}e^{\tau \Psi(-z)}}\Big(-r(1+iz)+\Psi(-z)\Big)
\end{align*}
is bounded. By Lemma \ref{lemma1} we can differentiate under the integral sign,  

\begin{align}
 \frac{\partial^2 C_t(x_r)}{\partial \sigma \partial \tau}=& \frac{\partial \V_t^L}{\partial \tau}=\frac{\partial S_t\tau \sigma e^{x_\tau}f_\tau(x_\tau)}{\partial \tau} \notag\\
=& S_t \sigma \Big[e^{x_\tau}f_\tau(x_\tau)-r\tau e^{x_\tau}f_\tau(x_\tau) + \frac{\tau e^{x_\tau}}{2\pi}\int_{iv+\R}\frac{\partial}{\partial \tau}e^{izx_\tau}e^{\tau\Psi(-z)}dz \Big]\notag\\
=&S_t \sigma \Big[e^{x_\tau}f_\tau(x_\tau)-r\tau e^{x_\tau}f_\tau(x_\tau)\notag\\ 
&\rfill{+ \frac{\tau e^{x_\tau}}{2\pi}\int_{iv+\R} e^{izx_\tau}e^{\tau\Psi(-z)}\Big(\Psi(-z)-riz\Big)dz\Big].}\notag \\ \label{VetaL2}
\end{align}
Using Fubini's Theorem, we obtain

{\allowdisplaybreaks\begin{align}
 \frac{\partial^2 C_t(x_r)}{\partial \sigma \partial \tau}=& S_t \sigma \Big[e^{x_\tau}f_\tau(x_\tau)-r\tau e^{x_\tau}f_\tau(x_\tau)\notag\\ 
 &\rfill{+ \frac{\tau e^{x_\tau}}{2\pi}\int_{iv+\R} e^{izx_\tau}e^{\tau\Psi(-z)}\Big(\Psi(-z)-riz\Big)dz} \notag \\
 =& S_t \sigma e^{x_\tau}\Big[f_\tau(x_\tau)-r\tau \big[f_\tau(x_\tau)+f_\tau'(x_\tau)\big] \notag\\
 &+\frac{\tau }{2\pi}\int_{iv+\R}e^{izx_\tau}e^{\tau\Psi(-z)}\big\{\frac{\sigma^2}{2}(iz-z^2)\notag \\
 &\rfill{+\int_\R\big(e^{-izy}-1+iz(e^y-1)\big)\nu(dy)\big\}dz \Big]}\notag\\
=& S_t \sigma e^{x_\tau}\Big[f_\tau(x_\tau)-r\tau \big[f_\tau(x_\tau)+f_\tau'(x_\tau)\big]+ \tau\frac{\sigma^2}{2}\big[f_\tau'(x_\tau)+f_\tau''(\tau)\big]\notag\\
 &\rfill{+\tau \int_\R\Big(f_\tau(x_\tau-y)-f_\tau(x_\tau)+(e^y-1)f_\tau'(x_\tau)\Big)\nu(dy) \Big].}\notag\\ \label{VetaL}
\end{align}}
The use of Fubini's Theorem is justified by \eqref{cotaI} and the additional hypothesis $\int_{iv+\R}|z^4e^{\tau\Psi(-z)}|dz<\infty$.

\subsubsection*{Vera}

Assuming that $\int_{iv+\R}|ze^{\tau\Psi(-z)}|dz<\infty$, and $0<\Sigma_1\leq \sigma\leq \Sigma_2$,
\begin{align*}
 \frac{\partial^2 C_t(x_r)}{\partial \sigma \partial r}=& \frac{\partial \V_t^L}{\partial r}= S_t\tau \sigma e^{x_r}\Big(-\tau f_\tau(x_r)-\tau f_\tau'(x_r)\Big) \notag  \\
 =&-S_t\tau^2 \sigma e^{x_r}\Big( f_\tau(x_r)+ f_\tau'(x_r)\Big). 
\end{align*}

\subsection{Third Order Greeks for Call Options}

\subsubsection*{Color}

We assume that $\int_{iv+\R}|z^4e^{\tau\Psi(-z)}|dz<\infty$, $\tau\in [\mathcal{T}_1,\mathcal{T}_2]$. Let $z\in iv+\R$ and  

\begin{align*}
 h(z,\tau)=&z^2{e^{(iz+1)x_\tau}e^{\tau \Psi(-z)}}\\ g(z)=&\frac{1}{z^2}.
\end{align*}

Then, $\int_{iv+\R}|g(x)|dx<\infty$ and by Lemma \ref{lemma3}
\begin{align*}\frac{\partial h(z, \tau) }{\partial \tau}=& z^2{e^{(iz+1)x_\tau}e^{\tau \Psi(-z)}}\Big(-r(1+iz)+\Psi(-z)\Big)
\end{align*}
is bounded. By Lemma \ref{lemma1} we can differentiate under the integral sign.

Thus,

\begin{align}
 \frac{\partial^3 C_t(x)}{\partial S_t^2 \partial \tau}=& \frac{\partial \Gamma_t^L}{\partial \tau}=\frac{1}{S_t 2\pi}\int_{iv+\R}e^{(iz+1)x_\tau}e^{\tau \Psi(-z)}\Big(-r(iz+1)+\Psi(-z)\Big)dz.  \label{ColorL2}
\end{align}

Using Fubini's Theorem we obtain
\begin{align}
 \frac{\partial^3 C_t(x)}{\partial S_t^2 \partial \tau}=& \frac{1}{S_t 2\pi}\int_{iv+\R}e^{(iz+1)x_\tau}e^{\tau \Psi(-z)}\Big(-r(iz+1)+\Psi(-z)\Big)dz \notag \\
 =&\frac{e^x}{S_t}\Big[ -r\big(f(x)+f'(x)\big)+\frac{\sigma^2}{2}\big(f'(x)+f''(x)\big) \notag \\
 &\rfill{ + \frac{1}{2\pi}\int_{iv+\R} e^{izx}e^{\tau \Psi(-z)}\int_\R e^{-izy}-1+iz(e^y-1)\nu(dy)dz\Big] }\notag\\
 =& -\frac{e^x}{S_t}\Big[r\Big(f_\tau(x)+f'_\tau(x)\Big)-\frac{\sigma^2}{2}\Big(f'_\tau(x)+f''_\tau(x)\Big) \notag \\
 &\rfill{ +\int_\R \Big(f_\tau(x)-f_\tau(x-y)-(e^y-1)f'_\tau(x)\Big)\nu(dy)\Big]. }\label{ColorL}
\end{align}
Fubini is justified by \eqref{cotaI} and the hypothesis $\int_{iv+\R}|z^4e^{\tau\Psi(-z)}|dz<\infty$.

%
%

\subsubsection*{Speed}
Assuming that $\int_{iv+\R}|ze^{\tau\Psi(-z)}|dz<\infty$,
\begin{align*}
 \frac{\partial^3 C_t(x_r)}{\partial S_t^3 }=& \frac{\partial \Gamma_t^L}{\partial S_t}=\frac{e^{x(S_t)}\Big(-\frac{1}{S_t} f_\tau(x(S_t))-\frac{1}{S_t}f_\tau'(x(S_t)) \Big)S_t-e^{x(S_t)}f_\tau(x(S_t))}{S_t^2}\notag \\
 =&-\frac{e^{x}}{S^2_t}\Big(2f_\tau(x)+f_\tau'(x)\Big). 
\end{align*}

\subsubsection*{Ultima}
We assume that $\int_{iv+\R}|z^6e^{\tau\Psi(-z)}|dz<\infty$. First we calculate $\frac{\partial f^{(n)}_\tau (x)}{\partial \sigma}$  for $n=0,1,2$. For $0<\Sigma_1\leq \sigma\leq \Sigma_2$ and $z\in iv+\R$, and denote 
\begin{align*}
h_n(z,\sigma)=(iz)^{n+2}{e^{\tau \Psi(-z)}},\quad g(z)=-\frac{e^{izx}}{z^2}.                                                                                                                                                      \end{align*}
Thus, $\int_{iv+\R} |g(z)| dz<\infty$ and  $\frac{\partial h_n(z,\sigma)}{\partial \sigma}$ is bounded for $n=0,1,2$. By Lemma \ref{lemma1} we can differentiate under the integral sign. Then,

\begin{align}
 \frac{\partial^n f_\tau (x)}{\partial \sigma}=&\frac{\partial }{\partial \sigma}\frac{1}{2\pi}\int_{iv+\R} (iz)^ne^{izx}e^{\tau\Psi(-z)}dz \notag \\
 =&\tau \sigma\frac{1}{2\pi}\int_{iv+\R} [(iz)^{n+1}-(iz)^{n+2}]e^{izx}e^{\tau\Psi(-z)}dz \notag \\
=& \tau \sigma\Big(f_\tau^{(n+1)}(x)+f_\tau^{(n+2)}(x)\Big) \label{dfdsigma}.
\end{align}

%
%
%
%

Now, we have

\begin{align*}
 \frac{\partial^3C_t(x)}{\partial \sigma^3 }=& \frac{\partial S_t \tau e^x\Big( f_\tau(x)+\tau\sigma^2\big[f_\tau'(x)+f_\tau''(x)\big]\Big)}{\partial \sigma}\notag \\
 =& S_t\tau^2 \sigma e^x \Big(3\big(f_\tau'(x)+f_\tau''(x)\big)+\tau\sigma^2\big[f_\tau''(x)+2f_\tau'''(x)+f_\tau^{(iv)}(x)\big]    \Big) .
\end{align*}

\subsubsection*{Zomma}
We assume that $\int_{iv+\R}|z^2e^{\tau\Psi(-z)}|dz<\infty$ and $0<\Sigma_1\leq \sigma\leq \Sigma_2$. Then, 

\begin{align*}
 \frac{\partial^3 C_t(x_r)}{\partial S_t^2 \partial \sigma }=& \frac{\partial Vanna_t^L}{\partial S_t}=\frac{\partial \tau \sigma e^{x(S_t)} {f}_\tau'(x(S_t))}{\partial S_t}\notag \\
 =& -\frac{\tau \sigma e^{x}}{S_t} \Big({f}_\tau'(x)+{f}_\tau''(x) \Big).
\end{align*}

\section{Examples}

\subsection{The Black-Scholes Model} 

If we assume that the gaussian distribution and density are exactly computed in R software, 
we can compare the Greeks for  Black-Scholes model using Lewis representation.

To aproximate the Fourier Transform we cut the integral  between $-A/2$ and $A/2$ and take a uniform partition of $[-A/2,A/2]$ of size $N$:

$$\int_{\R} e^{izx}g(z)dz\approx \int_{-A/2}^{A/2} e^{izx}g(z)dz\approx \frac{A}{N}\sum_{k=0}^{N-1}w_{k} e^{iz_k x}g(z_k),$$
where $z_k=-\frac{A}{2}+k\frac{A}{N-1}$ and $w_k$ are weights that correspond to the integration numerical rule.

Table \ref{comp} shows the $\ell_\infty$-errors in Black-Scholes model via Lewis representation and Fast Fourier Transform using: $S_t=1$, $r=0.05$, $T=1$, $\sigma=0.1$,  $A=300$ and $N=2^{22}$. The $\ell_\infty$-errors are $$\ell_\infty\mbox{-error}(GL)=\max_{x\in [-0.7,0.7]} |GL-G|,$$
for $x=\ln(K/S_t)-r\tau$.

\begin{table}[ht]
\centering
\begin{tabular}{rcl}
  \hline
  Greek & Expression & $\ell_\infty$-error\\
   \hline
  Call & $C=S\E(e^{X_\tau}-e^x)^+$ & 1.2e-07 \\ 
  Delta & $ \partial_S {C}(x)$ & 2.4e-07 \\ 
  Rho & $\partial_r C_{t}(x)$&1.9e-07 \\ 
  Vega & $ \partial_\sigma {C}(x)$&9.5e-08 \\ 
  Theta & $\partial_\tau {C}(x)$&1.2e-08 \\ 
  Gamma & $\partial^2_{SS} {C}(x)$& 9.5e-07 \\ 
  Vanna & $\partial^2_{\sigma S} {C}(x)$ &6.3e-07 \\ 
  Vomma & $\partial^2_{\sigma \sigma} {C}(x)$&7.5e-07 \\ 
  Charm & $\partial^2_{S\tau} {C}(x)$&6.8e-08 \\ 
  Veta & $\partial^2_{\sigma\tau} {C}(x)$&8.9e-08 \\ 
  Vera &$\partial^2_{\sigma r} {C}(x)$& 5.8e-07 \\ 
  Color & $\partial^3_{SS\tau} {C}(x)$&5.6e-07 \\ 
  Speed & $\partial^3_{SSS} {C}(x)$&6.3e-06 \\ 
  Ultima & $\partial^3_{\sigma\sigma\sigma} {C}(x)$&1.2e-05 \\ 
  Zomma & $\partial^3_{SS\sigma} {C}(x)$&9.5e-06 \\ 
   \hline
\end{tabular}
\caption{$\ell_\infty$-errors in Black-Scholes model via Lewis representation and Fast Fourier Transform using: $S_t=1$, $r=0.05$, $T=1$, $\sigma=0.1$, $A=300$ and $N=2^{22}$.}
\label{comp}
\end{table}

\subsection{The Merton model}

In this section we show some results for the Merton model. The Merton model has four parameters $(\sigma,\mu_J,\sigma_J,\lambda)$ where $\sigma$ is the  diffusion parameter, $\lambda$ is the jump intensity, $\mu_J$ and $\sigma_J$ are the mean and standard deviation of the jump which are gaussianly distributed. The characteristic function for the Merton model is:

\begin{align}
 \E(e^{izX_T})=& \exp\left\{iz\Big[\frac{\sigma^2}{2}-\lambda\big(e^{\mu_J+\frac{\sigma_J^2}{2}}-1\big)\Big]+z^2\frac{\sigma^2}{2}+\lambda\big(e^{iz\mu_J-z^2\frac{\sigma_J^2}{2}}-1\big)\right\}. \label{FCMerton}
\end{align}

All Greeks for \emph{At The Money} ($K=S_0e^{-rT}$) are shown in Table \ref{GMerton} following section \ref{GCALL}. Here we took $A=500$, $N=2^{20}$ and $A=500$, $N=2^{22}$, the  $\ell_\infty$-error for $x\in [-0.7,0.7]$ is in all Greeks lower than $10^{-5}$. In Figure \ref{GragMerton} the curves are shown in terms of $x=\ln(K/S_0)-rT$ for all Greeks with the comparison of the Black-Scholes model with volatility equal to implied volatility \emph{At The Money}.

\begin{table}[ht]
\centering
\begin{tabular}{rcccc}
  \hline
 && $A=500$, $N=2^{20}$ & $A=500$, $N=2^{21}$ & error \\ 
  \hline
  Call & $C$ & 0.0547129 & 0.0547129 &  2.6e-08 \\ 
  Delta &$ {\partial_S C}$& 0.5273560 & 0.5273562 &  2.5e-07\\ 
  Rho & ${\partial_r C}$&0.4726431 & 0.4726433 &  2.2e-07\\ 
  Vega &$ {\partial_\sigma C}$ &0.3077754 & 0.3077755 &  1.5e-07 \\ 
  Theta & ${\partial_\tau C}$&0.0524286 & 0.0524286 & 2.5e-08\\ 
  Gamma & ${\partial^2_{SS} C}$&3.0777536 & 3.0777550 &  1.5e-06 \\ 
  Vanna &  ${\partial^2_{\sigma S}C}$ &0.1538877 & 0.1538878 & 7.3e-08 \\ 
  Vomma & ${\partial^2_{\sigma\sigma} C}$&0.9091776 & 0.9091780 & 4.3e-07 \\ 
  Charm &${\partial^2_{S\tau} C}$& 0.1682859 & 0.1682860 & 8.1e-08\\ 
  Veta & ${\partial^2_{\sigma\tau}  C}$&0.1222075 & 0.1222076 &  5.8e-08 \\ 
  Vera & ${\partial^2_{\sigma r} C}$&-0.1538877 & -0.1538878 & 7.3e-08\\ 
  Color & ${\partial^3_{SS\tau} C}$&1.8556786 & 1.8556795 &  8.8e-07 \\ 
  Speed &${\partial^3_{SSS} C}$& -4.6166303 & -4.6166325 & 2.2e-06 \\ 
  Ultima &  ${\partial^3_{\sigma\sigma\sigma} C}$&-11.5390901 & -11.5390956 & 5.5e-06 \\ 
  Zomma & ${\partial^3_{SS\sigma} C}$&-21.6857596 & -21.6857699 & 1.0e-05\\ 
   \hline
\end{tabular}
\caption{Greeks in Merton model with: $S_0=1$, $r=0.05$, $x=0$, $T=1$, $\sigma=0.1$, $\mu_J=-0.005$, $\sigma_J=0.1$, $\lambda=1$.}
\label{GMerton}
\end{table}

\begin{figure}
 \centering
 \includegraphics[scale=.67,keepaspectratio=true]{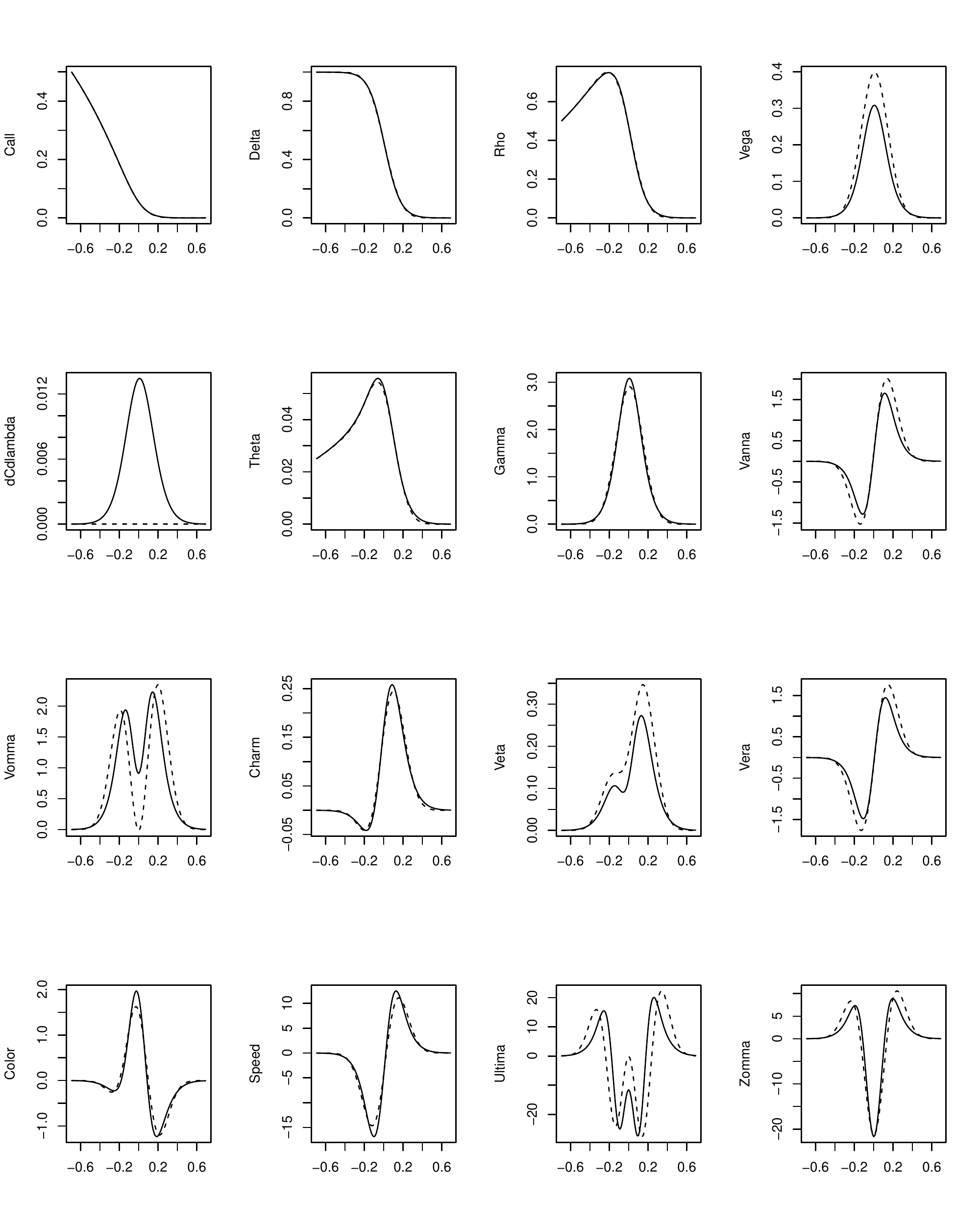}
 \caption{Greeks  in terms of $x=\ln(K/S_0)-rT$ for the Merton Model with parameters equal to Table \ref{GMerton} (continuos line). Discontinuos line: Black-Scholes Model with volatility equal to implied volatility in $x=0$ ($\sigma_{imp}(0)\approx 0.137$).}
 \label{GragMerton}
\end{figure}

 The characteristic function in this case is \eqref{FCMerton}. 
 To compute sensitivities w.r.t. $\mu_J$, $\sigma_J$ and $\lambda$ 
 we only need to differentiate the characteristic exponent with respect to these parameters:
{\allowdisplaybreaks\begin{align*}
\frac{\partial \Psi(-z)}{\partial \mu_j}=&\lambda iz\Big[e^{\mu_J+\sigma_J^2/2}-e^{-iz\mu_J-z^2\sigma_J^2/2}\Big],\\ 
\frac{\partial \Psi(-z)}{\partial \sigma_j}=&\lambda \sigma_J\Big[ize^{\mu_J+\sigma_J^2/2}-z^2e^{-iz\mu_J-z^2\sigma_J^2/2}\Big],\\
\frac{\partial \Psi(-z)}{\partial \lambda_j}=&iz\Big[e^{\mu_J+\sigma_J^2/2}-1\Big]+e^{-iz\mu_J-z^2\sigma_J^2/2}-1,
\end{align*}}
and for $\theta=\mu_J, \sigma_J, \lambda,$ 
\begin{align*}
     \frac{\partial C_\theta(x)}{\partial \theta }=&\tau S_t\frac{e^x}{2\pi}\int_{iv+\R}e^{izx}\frac{e^{\tau \Psi_\theta(-z)} }{iz(1+iz)}\frac{\partial \Psi_\theta(-z)}{\partial \theta}dz.
    \end{align*}

The differentiation under the integral sign is justified as above.    
    
Using the same parameters presented in Table \ref{GMerton} we obtain the sensitivities for ATM given in Table \ref{SMerton}.

\begin{table}[ht]
\centering
\begin{tabular}{rccc}
  \hline
 & $A=500$, $N=2^{20}$ & $A=1000$, $N=2^{22}$& error \\ 
  \hline
  $\mu_J$-sensitivity& 0.006703850 & 0.006703855 & 4.7e-09  \\ 
 $\sigma_J$-sensitivity& 0.239001059 & 0.239001230 & 1.7e-07  \\ 
  $\lambda$-sensitivity& 0.013407701 & 0.013407711 & 9.6e-09 \\  
   \hline
\end{tabular}
\caption{Sensitivities for Merton model with: $S_0=1$, $r=0.05$, $x=0$, $T=1$, $\sigma=0.1$, $\mu_J=-0.005$, $\sigma_J=0.1$, $\lambda=1$.}
\label{SMerton}
\end{table}

In Figure \ref{GrasMerton} we show the Greeks in terms of $x=\ln(K/S_0)-rT$.

\begin{figure}
 \centering
 \includegraphics[scale=.45,keepaspectratio=true]{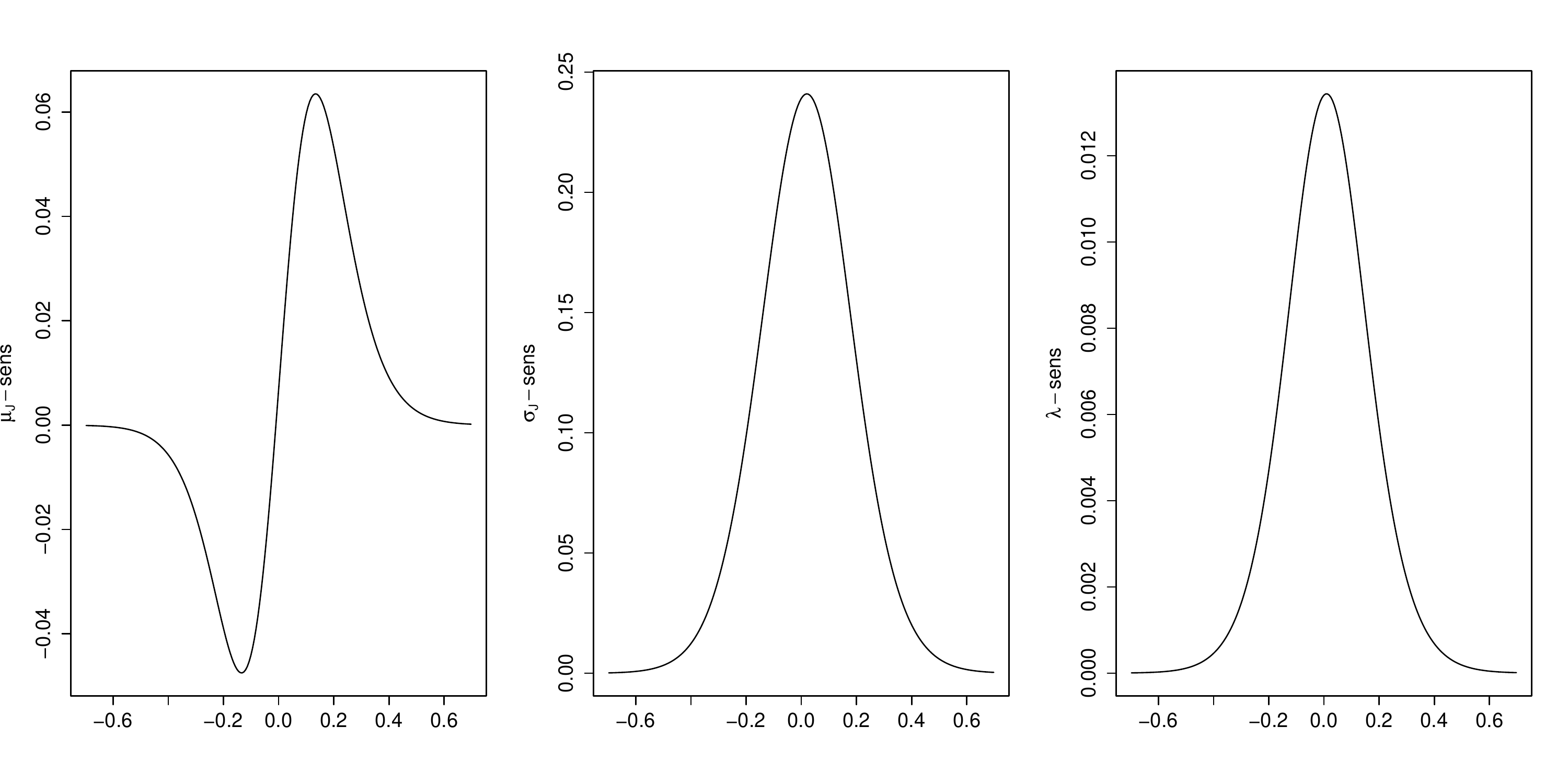}
 \caption{Sensitivities in terms of $x=\ln(K/S_0)-rT$ for Merton Model with parameters equal to Table \ref{SMerton}.}
 \label{GrasMerton}
\end{figure}

In \cite{kienitz} are shown some results for a Digital Option in the Merton model, which were obtained by applying finite difference approximations to the formula for the option prices in \cite{MadanCarrChang98}. Now we will deduce Delta, Gamma and Vega for a Digital Option and thus we will compare the results.

A Digital Option has a payoff given by:

\begin{align*}
 \indicator_{\{S_\tau-K>0\}}=\indicator_{\{X_\tau-x>0\}}.
\end{align*}

Using Lewis representation, the value for a Digital Option is:

\begin{align}
 D(x)=&\Q(X_\tau> x)=-\frac{1}{2\pi}\int_{iv+\R}e^{izx}\frac{e^{\tau\Psi(-z)}}{iz}dz,\label{DigitalC} 
\end{align}

where $x=\ln(K/S_t)-r\tau$. A direct differentiation leads to:

\begin{align}
 \frac{\partial D(x)}{\partial S_t}=& \frac{1}{S_\tau} f_\tau(x),\label{DDelta}\\
 \frac{\partial^2 D(x)}{\partial S_t^2}=& -\frac{1}{S_\tau^2} \Big(f_\tau(x)+f_\tau'(x)\Big),\label{DGamma}\\
\frac{\partial D(x)}{\partial \sigma}=& -\tau \sigma \Big(f_\tau(x)+f_\tau'(x)\Big)\label{DVega}.
\end{align}

Observe that the formulas \eqref{DigitalC}-\eqref{DVega} are valid in general for Digital Call options with $\int_{iv+\R}|ze^{\tau\Psi(-z)}|dz<\infty$. In \eqref{DVega}, differentiation under integral sign is similar to \eqref{dfdsigma}.

Then, our results via FFT are shown in Table \ref{DMerton}. In \cite{kienitz} this values are (by finite difference) : D$=0.531270$, D-Delta$=0.016610$, D-Gamma$=-2.800324\times 10^{-4}$, D-Vega$=-0.560070$. To obtain a given strike we define $\delta=2\pi\frac{N-1}{NA}$.

\begin{table}[ht]
\centering
\begin{tabular}{rcccc}
  \hline
  & D-Call & D-Delta & D-Gamma & D-Vega \\ 
  \hline
 $N=2^{20}$ & 0.531269863 & 0.016610445 & -0.000280032 & -0.560064360 \\ 
  $N=2^{22}$ & 0.531270245 & 0.016610457 & -0.000280032 & -0.560064763 \\ 
  error  & 3.8e-07& 1.2e-08& 2.0e-10& 4.0e-07  \\ 
   \hline
\end{tabular}
\caption{Digital Option and Greeks in Merton model with:$\delta=0.01$, $S_0=100$, $K=100$, $T=1$, $r=0.07$, $\sigma=0.2$, $\mu_J=0.05$, $\sigma_J=0.15$ and $\lambda=0.5$.}
\label{DMerton}
\end{table}

\subsection{The Variance Gamma Model}

In this section we will compare some results from the literature. As an example, in \cite{Glasserman2007} some results are shown for  the Variance Gamma model  with parameters $(\rho,\nu,\theta)$ where the characteristic function is:

\begin{align*}\E[e^{izX_T}]=\exp\left\{\frac{T}{\nu}\Big[iz\ln\big(1-\theta\nu-\frac{\rho^2\nu}{2}\big)-\ln\big(1-iz\theta\nu+\frac{z^2\rho^2\nu}{2}\big)\Big]\right\}.\end{align*}

To obtain a given strike we define $\delta=2\pi\frac{N-1}{NA}$. Thus, in Table \ref{VGcomp} we present two results for $N=2^{20}$ and $N=2^{22}$ with $\delta=0.01$. The error shows the convergence of the complex integral. In \cite{Glasserman2007} these results are obteined by applying finite difference approximations to the formula for the option prices in \cite{MadanCarrChang98}: Call$=11.2669$, Delta$=0.7282$ and $\rho$-derivative$=23.0434$ and in general with LRM method, the error is worse than $10^{-2}.$ 
\begin{table}[ht]
\centering
\begin{tabular}{c|cccc}
  \hline
 &Call & Delta & Gamma & $\frac{\partial Call}{\partial \rho}$\\
    \hline
$N=2^{20}$, $\delta=0.01$ &11.26689113&  0.72818427 & 0.01427437&23.04334371\\  
$N=2^{22}$, $\delta=0.01$& 11.26689919&  0.72818479 & 0.01427438&23.04336021\\ \hline
err$<$ & 8.1e-06& 5.2e-07& 1.0e-08& 1.6e-05 \\ \hline
 \end{tabular}
\caption{Greeks and $\rho$-sensitivity for Variance Gamma model with: $(\rho,\nu,\theta)=(0.2,1,-0.15)$, $r=0.05$, $T=1$, $S_0=K=100$ ($x=-0.05$).}
\label{VGcomp}
\end{table}

%
%


\section{Conclusions}
 
Greeks are an important input for market makers in risk management. A lot of options are path dependent and they don't have explicit formula. However, for the European options in the exponential L\'evy models we have the Lewis formula,  which allows us to obtain closed formulas for Greeks, many of which are only density dependent; others require integration. In general, all Greeks can be approximated   with high  accuracy because they are a simple integral, similar to the Black-Scholes model.

A large numbers of papers are dedicated to obtain Greeks for more complex payoff functions . However, in order to estimate the accuracy of their methods, Greeks approximations are computed through the finite difference technique.

For a fix Strike $K$, we consider $x=\ln(K/S_t)-r\tau$, with $\tau=T-\tau$ the time to maturity. Thus, the Greeks for call options can be calculated through  Table \ref{Greeks}.

\begin{table}
\centering
{\small
\begin{tabular}{||c||r l||} \hline \hline
 First order& & \\ \hline 
Delta & $\partial C_S(x)=$&$\displaystyle\tilde{\Q}(X_\tau>x) $\\
Rho& $\displaystyle\partial C_{r}(x)=$&$\displaystyle\tau Se^{x}\Q(X_\tau>x)$ \\
Vega & $ \displaystyle\partial C_{\sigma}(x)=$&$\displaystyle S\tau\sigma e^xf_\tau(x)$\\
if $\nu=\lambda \bar \nu$ &  $ \displaystyle\partial C_{\lambda}(x)=$& $\displaystyle \tau \Big[\int_\R \Big(e^{y}C(x-y)-C(x)$\\
& & $\displaystyle \qquad -S(e^{y}-1)\tilde{\Q}(X_\tau>x)\Big)\bar\nu(dy)\Big]$\\
Theta & $ \displaystyle\partial C_{\tau}(x)=$&$\displaystyle S\Big[re^{x}\Q(X_\tau>x)+\frac{\sigma^2}{2}e^{x}{f}_\tau(x)\Big]+\frac{\lambda}{\tau}\partial C_{\lambda}(x)$\\\hline 
Second order & & \\ \hline
Gamma&  $\displaystyle \partial^2 C_{SS}(x)=$&$\displaystyle{S^{-1}}{e^{x}}{f_\tau}(x)$\\
Vanna& $\displaystyle \partial^2 C_{\sigma S}(x)=$&$\displaystyle -\tau \sigma e^{x} {f}_\tau'(x)$\\
Vomma& $\displaystyle \partial^2 C_{\sigma\sigma}(x)=$&$\displaystyle S\tau e^{x}\Big({f}_\tau(x) +\tau \sigma^2 \big[f_\tau'(x)+f_\tau''(x)\big]\Big)$ \\
Charm & $ \displaystyle\partial^2 C_{S\tau}(x)=$& see \eqref{charmL2} and \eqref{charmL}\\
Veta &  $ \displaystyle\partial^2 C_{\sigma \tau}(x)=$& see \eqref{VetaL2} and \eqref{VetaL}\\
Vera& $\displaystyle \partial^2 C_{\sigma r}(x)=$&$\displaystyle -S\tau^2 \sigma e^{x}\Big( f_\tau(x)+ f_\tau'(x)\Big)$\\ \hline
Third order& &\\ \hline
Color& $\displaystyle \partial^3 C_{SS\tau}(x)=$& see \eqref{ColorL2} and \eqref{ColorL} \\
Speed& $ \displaystyle \partial^3 C_{SSS}(x)=$& $\displaystyle -{S^{-2}}{e^{x}}\Big(2f_\tau(x)+f_\tau'(x)\Big)$\\
Ultima&  $ \displaystyle\partial^3 C_{\sigma\sigma\sigma}(x)=$& $\displaystyle S\tau^2 \sigma e^x \Big(3\big(f_\tau'(x)+f_\tau''(x)\big)$\\
& &$\displaystyle \qquad \qquad  +\tau\sigma^2\big[f_\tau''(x)+2f_\tau'''(x)+f_\tau^{iv}(x)\big]\Big)$\\
Zomma& $\displaystyle \partial^3 C_{SS\sigma}(x)=$& $\displaystyle -\tau \sigma S^{-1} e^{x} \Big({f}_\tau'(x)+{f}_\tau''(x)\Big)$ \\ \hline \hline
\end{tabular}}
\caption{Greeks in exponential L\'evy models in terms of $x=\ln(K/S)-r\tau$.}
\label{Greeks}
\end{table}


We observe that, if the density of $X_\tau$ is known, then many of the Greeks can be exactly obtained. Some examples of these are: Normal Inverse Gaussian, Variance Gamma, Generalized Hyperbolic, Meixner and others.


\bibliographystyle{econometrica}
\bibliography{catalogo}

\end{document}